\newtheorem{theorem}{Theorem}
\newtheorem{lemma}{Lemma}
\newtheorem{proposition}{Proposition}
\newtheorem{definition}{Definition}
\newtheorem{remark}{Remark}
\tikzset{
    >=stealth',
    user/.style={
           rectangle,
           rounded corners,
           draw,
           minimum height=2em,
           minimum width=1cm,
           text centered},
    relay/.style={
           rectangle,
           rounded corners,
           draw,
           minimum height=1em,
           minimum width=1cm,
           text centered},
    pil/.style={
           ->,
           shorten <=2pt,
           shorten >=2pt},
    pil_rev/.style={
           <-,
           shorten <=2pt,
           shorten >=2pt},
    pil_dash/.style={
    		->, dashed,
    		shorten <=2pt,
    		shorten >=2pt}
}
\DeclareMathOperator{\SNR}{\text{SNR}}
\DeclareMathOperator*{\argmin}{arg\,min}
\begin{document}

\title{A Low-Complexity Message Recovery Method\\for Compute-and-Forward Relaying}

\author{Amaro Barreal$^\dagger$, Joonas P\"a\"akk\"onen$^\dagger$, David Karpuk$^\dagger$, Camilla Hollanti$^\dagger$, Olav Tirkkonen$^\ddag$
\\
$^\dagger$Department of Mathematics and Systems Analysis, School of Science, Aalto University. \\
$^\ddag$Department of Communications and Networking, School of Electrical Engineering, Aalto University. \\
\{amaro.barreal, joonas.paakkonen, david.karpuk, camilla.hollanti, olav.tirkkonen\}@aalto.fi}

\maketitle

\begin{abstract}
The Compute-and-Forward relaying strategy achieves high computation rates by decoding linear combinations of transmitted messages at intermediate relays. However, if the involved relays independently choose which combinations of the messages to decode, there is no guarantee that the overall system of linear equations is solvable at the destination. In this article it is shown that, for a Gaussian fading channel model with two transmitters and two relays, always choosing the combination that maximizes the computation rate often leads to a case where the original messages cannot be recovered. It is further shown that by limiting the relays to select from carefully designed sets of equations, a solvable system can be guaranteed while maintaining high computation rates. The proposed method has a constant computational complexity and requires no information exchange between the relays.
\end{abstract}

\section{Introduction}
In wireless multiuser relay networks, both interference from multiple transmitters and noise degrade the system performance. To combat these issues, Nazer and Gastpar recently introduced a new relaying strategy called Compute-and-Forward (CaF) \cite{nazer}. Their key idea is to decode an integer linear combination of the transmitted messages at intermediate relays, and then forward the combinations to the destination.

Finding integer combinations that yield high transmission rates turns out, however, to be a complicated task. Particularly, finding the equation coefficients of the linear combinations that maximize the data transmission rate coincides with a Shortest Vector Problem (SVP) \cite{lll}, for which various algorithms have been proposed \cite{fengargmin,soussi,wei}. Unfortunately, these algorithms tend to be either highly complex or suboptimal.
Recently, algorithms of polynomial complexity for finding the equation coefficients that maximize the rate have been presented \cite{svpsahraei,svpwen}. However, choosing the coefficient vectors that maximize the instantaneous computation rate might result in an overall unsolvable system of linear equations at the destination due to linear dependency of the coefficient vectors -- that is, the original messages might not necessarily be recoverable even if the combinations are successfully decoded at the relays.

The message recoverability problem has been addressed in \cite{soussi}, where precoding at the transmitters is used to increase the probability of receiving independent combinations at the destination, while \cite{wei,pappi,mejri} allow cooperation between the relays.
These methods either require preprocessing at the transmitters or signaling between the relays.

In \cite{hong}, this problem is mitigated by choosing a subset of relays with suitable equation coefficients. This approach can however not be used in the symmetric case, that is if the number of relays equals the number of transmitters.

In this paper, we introduce a new, efficient approach to finding the coefficient vectors for a system with two transmitters and two relays. We observe that, in a Gaussian fading channel, there are only few coefficient vectors that typically maximize the computation rate. Based on this observation, we compile small coefficient vector candidate sets for both relays so that the probability of the relays choosing linearly dependent vectors vanishes.

Our proposed method does not require jointly finding the coefficients at the relays, thus there is no need for cooperation. Furthermore, searching for appropriate coefficient vectors only over a small set of vectors reduces the computational complexity as opposed to solving the corresponding SVP. 

The main benefits of our proposed method are twofold. Firstly, the end-to-end information outage probability vanishes as the message rate at the transmitters approaches zero. Secondly, the computational complexity of our scheme is constant while still providing a relatively high throughput. Our findings are supported by extensive computer simulations.

The paper is organized as follows. We give a brief introduction of the CaF protocol in Section \ref{compu}, and introduce methods for finding suitable equation coefficients in Section \ref{vectorsec}. Section \ref{simu} presents the performance metrics and corresponding numerical results, while Section \ref{conclu} concludes the paper.

\section{The Compute-and-Forward Protocol}\label{compu}
In this article, we focus on a wireless multiple-access system with $L=2$ transmitters and $M=2$ relays, as illustrated in Figure \ref{systemfig}. 
\begin{figure}[h]
\centering
\begin{tikzpicture}
	\node[relay] (r1) {\scriptsize Relay 1};
	\node[below=0.07 of r1] (vert2) {};
	\node[relay,below=0.05 of vert2] (rn) {\scriptsize Relay 2};
	\node[left=3 of vert2] (vert1) {};
	\node[user, above=0.05 of vert1] (u1) {\scriptsize Tx 1}
		edge[pil_dash] node[pos=0.3,above]{\tiny $h_{11}$} (r1.west)
		edge[pil_dash] node[pos=0.3,above]{\tiny $h_{21}$} (rn.west);
	\node[user, below=0.07 of vert1] (uk) {\scriptsize Tx 2}
		edge[pil_dash] node[pos=0.3,below]{\tiny $h_{12}$} (r1.west)
		edge[pil_dash] node[pos=0.3,below]{\tiny $h_{22}$} (rn.west);
	\node[user, right=2.5 of vert2] (dest) {\scriptsize Dest.}
		edge[pil_rev] (r1.east)
		edge[pil_rev] (rn.east);
	\node[right=1 of u1] (dummy1) {};
	\node[above=0.2 of dummy1] (1hop) {\scriptsize First Hop};
	\node[right=1.8 of 1hop] (2hop) {\scriptsize Second Hop};
\end{tikzpicture}
\caption{System model with two transmitters and two relays connected to a destination. The first hop is modeled as a wireless multiple access fading channel. The relays are connected to the destination with error-free bit pipes.}
\label{systemfig}
\end{figure}
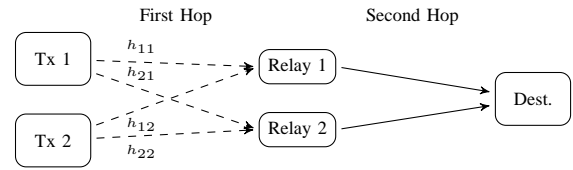
The first hop from the transmitters to the relays is modeled as a Gaussian fading channel. The relays are connected to a destination with error-free bit pipes with unlimited capacities. The goal of the system is to reliably transfer information from both transmitters, via the relays, all the way to the destination. The relays apply the original CaF strategy introduced in \cite{nazer}, briefly exposed in the following. 

The transmitters want to communicate messages $\textbf{w}_l \in \mathbb{F}_p^k$, $l=1,2$, where $p$ is prime. Before transmission, these messages are encoded into $n$-dimensional codewords $\mathbf{x}_l$ that are subject to the power constraint $||\mathbf{x}_l||^2\leq nP$.
Note that throughout this paper, we assume that both transmitters use identical transmission powers, and that the relays have identical noise levels with normalized variance, \emph{i.e.}, $\SNR = P$. This is justified as allowing asymmetric $\SNR$ values affects the performance metric used for simulations for all three considered strategies equally, and hence does not affect the comparison results.

\begin{definition}
The \emph{message rate} at transmitter $l$ is defined as
\begin{align}
\mathcal{R}_l^s = \frac{k}{n}\log_2 p.
\end{align}
\end{definition}

For the received signal at relay $m$, we use the following channel model:
\begin{align}
\mathbf{y}_m = \sum_{l=1}^L h_{ml}\mathbf{x}_l + \mathbf{z}_m,  
\end{align}
where $\mathbf{z}_m$ is additive white Gaussian noise with normalized variance $\sigma^2=1$, and the channel coefficients are assumed to be i.i.d. and real-valued, $h_{ml} \sim \mathcal{N}(0,1)$ \cite{nazer}. Let $\mathbf{h}_m=\left[\begin{smallmatrix} h_{m1} & h_{m2} & \cdots & h_{mL}\end{smallmatrix}\right]^T$ denote the channel vector for relay $m$. 

Channel state information is only available at the relays. Further, it is important to note that we assume that each relay only knows the channels to itself, i.e., relay 1 knows $h_{11}$ and $h_{12}$, while relay 2 knows $h_{21}$ and $h_{22}$.

The key feature of CaF is to decode and forward linear combinations of the transmitted messages. In this paper, we only consider integer combinations of the transmitted codewords. 

\begin{remark}
	As noted above, each transmitter encodes its message $\textbf{w}_l \in \mathbb{F}_p^k$ into a lattice vector $\mathbf{x}_l$. Hence, although the destination ultimately wants to decode the messages $\mathbf{w}_l$, the relays decode linear equations involving the codewords $\mathbf{x}_l$, and it is thus meaningful to consider $\mathbb{Z}$-linear rather than $\mathbb{F}_p$-linear combinations. The destination, upon reception of enough $\mathbb{Z}$-linearly independent equations, can solve for the codewords and recover the original messages \cite{nazer}.  
\end{remark}

The integer combination of relay $m$ is represented by an \emph{equation coefficient vector} $\mathbf{a}_m$, more explicitely
\begin{align}
\mathbf{a}_m = \left[\begin{smallmatrix} a_{m1} & a_{m2} & \cdots & a_{mL}\end{smallmatrix}\right]^T.
\label{adef}
\end{align}
These vectors form the \emph{equation coefficient matrix}
\begin{align}
\mathbf{A}=\left[\begin{smallmatrix} \mathbf{a}_{1} & {\bf a}_{2} & \cdots & {\bf a}_{M}\end{smallmatrix}\right]^T,
\label{matrix}
\end{align}
and the destination can recover the single codewords if and only if $\mathbf{A}$ is invertible, that is $\det(\mathbf{A}) \neq 0$. 

Figure~\ref{deterrorsfig} shows that, if the relays always choose the equation coefficient vector that maximizes their instantaneous computation rate, the probability of the matrix $\mathbf{A}$ being singular is noticeable. This phenomenon is more pronounced for low $\SNR$ values, and even more considerable for more than two relays and transmitters. In this article, we focus on the case $M = L = 2$.  

\begin{figure}[h]
		\includegraphics[scale=0.32]{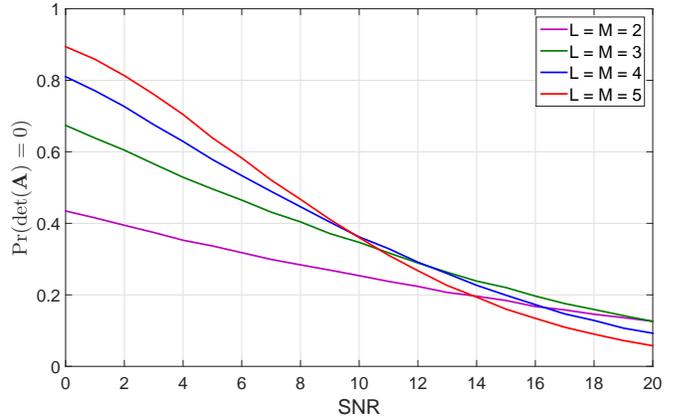}
		\caption{Probability of the equation coefficient matrix $\mathbf{A}$ being singular for various numbers of relays $M$ and transmitters $L$ when independently choosing the coefficient vectors that maximize the computation rate.}
		\label{deterrorsfig}
\end{figure}

The most important performance metric of CaF is the so-called \emph{computation rate}, whose meaning is explained in the following. If relay $m$ achieves computation rate $\mathcal{R}_m^r = \mathcal{R}_m^r(\mathbf{h}_m,\mathbf{a}_m)$, then it is able to decode a linear combination of the codewords whose corresponding message rates satisfy $\mathcal{R}^s \le \mathcal{R}_m^r$. That is, relay $m$ can support message rate $\mathcal{R}_m^r$ if
\begin{align}
{\cal R}_{l}^{s}\leq\min_{a_{ml}\neq 0}{\cal R}_{m}^{r},
\end{align}
where ${\cal R}_{l}^{s}$ is the message rate at transmitter $l$, and $a_{ml}\neq 0$ means that the message of transmitter $l$ is included in the linear combination decoded by relay $m$.

In the scenario considered in this paper, we have two transmitters with identical message rates ${\cal R}^{s}={\cal R}_{1}^{s}={\cal R}_{2}^{s}$.

Both relays can support this message rate if
\begin{align}
{\cal R}^{s} \le \min\{\mathcal{R}_1^r,\mathcal{R}_2^r\}.
\end{align}

The main observations about the computation rate derived in \cite{nazer} are summarized in the following theorem.
\begin{theorem} 
	In the above setup, a computation rate region of
\begin{align}
	\mathcal{R}_m^r(\mathbf{h}_m,\mathbf{a}_m) = \max\limits_{\alpha \in \mathbb{R}} \frac{1}{2}\log^+\left(\frac{P}{\alpha^2+P||\alpha\mathbf{h}_m-\mathbf{a}_m||^2}\right)
\end{align}
is achievable. This expression is further maximized by choosing $\alpha$ to be 
\begin{align}
	\alpha_{\text{MMSE}} = \frac{P\mathbf{h}_m^T\mathbf{a}_m}{1+P||\mathbf{h}_m||^2},
\end{align}
resulting in a computation rate region of 
\small
\begin{align}
\mathcal{R}_m^r(\mathbf{h}_m,\mathbf{a}_m) = \frac{1}{2}\log^+\left(\left(||\mathbf{a}_m||^2-\frac{P|\mathbf{h}_m^T\mathbf{a}_m|^2}{1+P||\mathbf{h}_m||^2}\right)^{-1}\right).
\label{compurate}
\end{align}
\end{theorem}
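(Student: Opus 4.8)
The statement bundles together two claims of very different natures, and the plan is to handle them separately. The first is the \emph{achievability} of the rate region for a generic scaling factor $\alpha$; this is the genuinely information-theoretic content, and I would not reprove it from scratch but rather invoke the nested-lattice construction of Nazer and Gastpar \cite{nazer}. The point I would recall is the following: each transmitter maps its message to a codeword of a common lattice code meeting the power constraint $\|\mathbf{x}_l\|^2 \le nP$; relay $m$ forms $\alpha\mathbf{y}_m = \sum_{l} a_{ml}\mathbf{x}_l + \big(\sum_{l}(\alpha h_{ml}-a_{ml})\mathbf{x}_l + \alpha\mathbf{z}_m\big)$, so the bracketed effective-noise term has per-dimension variance $\alpha^2 + P\|\alpha\mathbf{h}_m-\mathbf{a}_m\|^2$ by independence of the codeword coordinates together with the power constraint. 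A lattice-quantization / joint-typicality decoding argument then recovers the integer combination $\sum_l a_{ml}\mathbf{x}_l$ reliably whenever $\mathcal{R}^s$ is below $\tfrac12\log^+\!\big(P/(\alpha^2+P\|\alpha\mathbf{h}_m-\mathbf{a}_m\|^2)\big)$. I would cite \cite{nazer} for the details and focus the written proof on the optimization.

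The remaining two claims — the optimal $\alpha$ and the closed form \eqref{compurate} — amount to a one-variable minimization, which is what I would actually carry out. Since $P$ is fixed and $\log^+$ is nondecreasing, maximizing $\mathcal{R}_m^r$ over $\alpha\in\mathbb{R}$ is the same as minimizing the denominator
\[
f(\alpha) \;:=\; \alpha^2 + P\|\alpha\mathbf{h}_m-\mathbf{a}_m\|^2 \;=\; \big(1+P\|\mathbf{h}_m\|^2\big)\alpha^2 - 2P\big(\mathbf{h}_m^T\mathbf{a}_m\big)\alpha + P\|\mathbf{a}_m\|^2 .
\]
This is a quadratic in $\alpha$ with strictly positive leading coefficient $1+P\|\mathbf{h}_m\|^2$, hence strictly convex with a unique minimizer obtained from $f'(\alpha)=0$, namely $\alpha_{\text{MMSE}} = P(\mathbf{h}_m^T\mathbf{a}_m)/(1+P\|\mathbf{h}_m\|^2)$ — exactly the stated value. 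Substituting back (equivalently, using $f(\alpha_{\text{MMSE}}) = f(0) - (1+P\|\mathbf{h}_m\|^2)(\alpha_{\text{MMSE}})^2$) gives $f(\alpha_{\text{MMSE}}) = P\|\mathbf{a}_m\|^2 - P^2|\mathbf{h}_m^T\mathbf{a}_m|^2/(1+P\|\mathbf{h}_m\|^2)$, so that
\[
\frac{P}{f(\alpha_{\text{MMSE}})} \;=\; \left(\|\mathbf{a}_m\|^2 - \frac{P|\mathbf{h}_m^T\mathbf{a}_m|^2}{1+P\|\mathbf{h}_m\|^2}\right)^{-1}
\]
after cancelling the common factor $P$, which is precisely \eqref{compurate}.

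The only point needing a word of care is the $\log^+$. Because $x \mapsto \log^+ x$ is monotone, the location of the optimizer is unaffected by it, so no case split is needed to find $\alpha_{\text{MMSE}}$; one only notes afterwards that the rate is strictly positive exactly when the bracketed quantity in \eqref{compurate} exceeds $1$ (\emph{i.e.}, $f(\alpha_{\text{MMSE}}) < P$), and is clipped to $0$ otherwise. In short, the genuine obstacle here is purely expository — deciding how much of the achievability machinery of \cite{nazer} to reproduce versus cite — while the optimization and the resulting formula are entirely routine algebra.
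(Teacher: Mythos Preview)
Your proposal is correct and mirrors the paper's own treatment: the paper does not prove this theorem at all but simply attributes both the achievability and the closed-form optimization to \cite{nazer}, so your plan to cite \cite{nazer} for the lattice-coding achievability and then carry out the routine quadratic minimization in $\alpha$ is fully aligned with (indeed, more detailed than) what the paper provides.
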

The computation rate is thus a function of the channel coefficients and the equation coefficients. While the channel coefficients are random variables, the relays are free to choose their desired equation coefficients. In this paper, we mainly focus on how to efficiently find equation coefficient vectors which yield high computation rates while ensuring that the vectors chosen by the relays are linearly independent.

\section{Selecting the Equation Coefficient Vectors}\label{vectorsec}
In this section we present three methods in detail for finding equation coefficient vectors $\mathbf{a}_m$ at the relays, all of which require no cooperation between the relays.

The main interest in this article is to guarantee a solvable system of equations at the destination, while still being able to support high message rates. To that end, we will include the following indicator function in our performance metric: 
\begin{align}
\label{indicator}
	\mathbbm{1}_{\left\{\det(\mathbf{A}) \neq 0\right\}} = \begin{cases} 1 &\mbox{if } \det(\mathbf{A}) \neq 0, \\ 0 &\mbox{if } \det(\mathbf{A}) = 0. \end{cases} 
\end{align}

\emph{1. Exhaustive search:} Each relay searches over all possible equation coefficient vectors and chooses the one that maximizes its instantaneous computation rate \eqref{compurate}.

\begin{lemma} \cite{fengargmin}
Finding the coefficient vector that maximizes the instantaneous computation rate is equivalent to solving
\begin{align}
\mathbf{a}_m = \argmin_{\mathbf{a}\in\mathbb{Z}^L\backslash\left\{(0,\ldots,0)\right\}}\mathbf{a}^T \mathbf{Ga},
\label{argmineq}
\end{align}
where 
\begin{align}
	\mathbf{G} = \mathbf{I}_{L}-\frac{P\mathbf{h}_m\mathbf{h}_m^T}{1+P||\mathbf{h}_m||^2},
\end{align}
and corresponds to finding the shortest vector in the lattice whose Gram matrix is $\mathbf{G}$.  
\end{lemma}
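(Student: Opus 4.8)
The plan is to reduce the rate maximization to a quadratic minimization over the integer lattice in three steps: peel off the monotone outer functions in \eqref{compurate}, recognize the remaining objective as a quadratic form, and identify that form with a squared lattice norm. First I would observe that $t\mapsto\log^+ t$ is nondecreasing and $t\mapsto t^{-1}$ is strictly decreasing on $(0,\infty)$, so maximizing $\mathcal{R}_m^r(\mathbf{h}_m,\mathbf{a})$ over $\mathbf{a}\in\mathbb{Z}^L\setminus\{(0,\ldots,0)\}$ is the same as minimizing
\begin{align}
f(\mathbf{a}) = \|\mathbf{a}\|^2 - \frac{P\,|\mathbf{h}_m^T\mathbf{a}|^2}{1+P\|\mathbf{h}_m\|^2}
\end{align}
over that set. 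The $\log^+$ clips the rate to $0$ exactly when $f(\mathbf{a})\ge 1$; since truncation is itself monotone, it does not change the set of minimizers, so it may be ignored.

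Next I would rewrite $f$ as a quadratic form: using $\|\mathbf{a}\|^2=\mathbf{a}^T\mathbf{a}$ and $|\mathbf{h}_m^T\mathbf{a}|^2=\mathbf{a}^T\mathbf{h}_m\mathbf{h}_m^T\mathbf{a}$,
\begin{align}
f(\mathbf{a})=\mathbf{a}^T\left(\mathbf{I}_L-\frac{P\,\mathbf{h}_m\mathbf{h}_m^T}{1+P\|\mathbf{h}_m\|^2}\right)\mathbf{a}=\mathbf{a}^T\mathbf{G}\mathbf{a},
\end{align}
which is precisely \eqref{argmineq}. It then remains to check that $\mathbf{G}$ is symmetric positive definite, hence a genuine Gram matrix. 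Since $\mathbf{G}$ is a rank-one perturbation of $\mathbf{I}_L$, it acts as the identity on the hyperplane $\mathbf{h}_m^\perp$ and has eigenvalue $1-\frac{P\|\mathbf{h}_m\|^2}{1+P\|\mathbf{h}_m\|^2}=\frac{1}{1+P\|\mathbf{h}_m\|^2}>0$ on $\mathrm{span}(\mathbf{h}_m)$; all eigenvalues are therefore strictly positive, which also shows $f(\mathbf{a})>0$ for $\mathbf{a}\neq\mathbf{0}$, filling the gap left in the first step. Writing $\mathbf{G}=\mathbf{B}^T\mathbf{B}$ for an invertible $\mathbf{B}$ (e.g.\ a Cholesky factor) gives $\mathbf{a}^T\mathbf{G}\mathbf{a}=\|\mathbf{B}\mathbf{a}\|^2$, so minimizing over $\mathbf{a}\in\mathbb{Z}^L\setminus\{(0,\ldots,0)\}$ is exactly the search for the shortest nonzero vector of the lattice $\mathbf{B}\mathbb{Z}^L$, whose Gram matrix is $\mathbf{B}^T\mathbf{B}=\mathbf{G}$.

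I do not expect a substantial obstacle here; the argument is essentially bookkeeping. The only points that need care are justifying that the $\log^+$ truncation leaves the argmin untouched, and the positive-definiteness of $\mathbf{G}$ -- the latter being what guarantees that the quadratic form is a bona fide lattice norm rather than merely positive semidefinite, so that the ``lattice whose Gram matrix is $\mathbf{G}$'' is well defined. Both follow from the one-line eigenvalue computation above, which is valid for every channel realization $\mathbf{h}_m$, including the degenerate case $\mathbf{h}_m=\mathbf{0}$, where $\mathbf{G}=\mathbf{I}_L$ and the problem reduces to the standard integer shortest-vector problem.
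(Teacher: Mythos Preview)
Your argument is correct and is the standard derivation. Note, however, that the paper does not prove this lemma at all: it is quoted from \cite{fengargmin} and stated without proof, so there is no in-paper argument to compare against. Your three-step reduction (monotonicity of $\log^+$ and the reciprocal, rewriting the objective as $\mathbf{a}^T\mathbf{G}\mathbf{a}$, and the eigenvalue check that $\mathbf{G}\succ 0$ so that a Cholesky factor $\mathbf{B}$ exists and $\mathbf{a}^T\mathbf{G}\mathbf{a}=\|\mathbf{B}\mathbf{a}\|^2$) is exactly the usual justification given in the compute-and-forward literature, and all steps are sound as written.
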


 While maximizing the computation rate individually at each relay yields the highest message rate supported by the given channel realizations, there is no guarantee that the chosen vectors form an invertible matrix at the destination. 

\emph{2. Proposed splitting method:} In our proposed method, we split possible coefficient vectors into two disjoint sets $V_1$ and $V_2$. Set $V_1$ is pre-assigned to one relay, while set $V_2$ is pre-assigned to the other. Each relay independently chooses the vector within its set that maximizes the computation rate \eqref{compurate}. 

\begin{remark}
\label{quadratic_form}
	Let $\mathbf{a}_m = \left[\begin{smallmatrix} a_{m1} & a_{m2}\end{smallmatrix}\right]^T$ be a solution to \eqref{argmineq}. Note that the matrix $\mathbf{G}$ is symmetric, that is \eqref{argmineq} results in 
	\[
		\mathbf{a}_m^T \mathbf{G} \mathbf{a}_m = \mathbf{a}_m^T \left[\begin{smallmatrix} x & z \\ z & y \end{smallmatrix}\right] \mathbf{a}_m = a_{m1}^2x + 2a_{m1} a_{m2} z + a_{m2}^2 y.
	\]
Thus if $\left[\begin{smallmatrix} a_{m1} & a_{m2} \end{smallmatrix}\right]^T$ is a solution to \eqref{compurate}, so is $\left[\begin{smallmatrix} -a_{m1} & -a_{m2} \end{smallmatrix}\right]^T$. 
In particular, we can fix $a_{m1} \ge 0$. 

One may thus divide the set of candidate vectors $\mathbb{Z}^2\backslash\left\{(0,0)\right\}$ into equivalence classes modulo the $\mathbb{Z}_2$-action of $\pm 1$.  Solutions to \eqref{argmineq} are sought for in the  quotient space $\mathcal{E} = (\mathbb{Z}^2\backslash\left\{(0,0)\right\}) \mod \mathbb{Z}_2$. The elements in $\mathcal{E}$ are equivalence classes $\left[\mathbf{a}\right]$ of vectors $\mathbf{a}\in \mathbb{Z}^2 \backslash\left\{(0,0)\right\}$ up to $\mathbb{Z}_2$, i.e. pairs of vectors, $\left[\mathbf{a}\right] = \left\{\mathbf{a},-\mathbf{a}\right\}$. A unique way of representing elements in $\mathcal{E}$ is by points in $\mathbb{Z}^2\backslash\left\{(0,0)\right\}$ in the right half plane, including the upper y-axis, but not the lower part, \emph{i.e.}, with vectors $(a_1, a_2)$, $a_1>0$ or $a_1=0$, $a_2>0$.
\end{remark}

\begin{proposition}\label{prop:rotation}
The rotation $\mathbf{U} = \left[\begin{smallmatrix} 0 & 1 \\ -1 & 0 \end{smallmatrix}\right]$ acts in a non-degenerate fashion on $\mathcal{E}$. Any point $\left[\mathbf{a}\right] \in \mathcal{E}$ has a unique pair $\left[\mathbf{b}\right]\neq \left[\mathbf{a}\right] \in \mathcal{E}$, so that $\mathbf{U}\left[\mathbf{a}\right] = \left[\mathbf{b}\right]$ and $\mathbf{U}\left[\mathbf{b}\right] = \left[\mathbf{a}\right]$. Thus $\mathcal{E}$ can be divided into disjoint sets $\mathcal{E}_1$ and $\mathcal{E}_2$ so that $\mathcal{E}_1 \cup \mathcal{E}_2 = \mathcal{E}$, $\mathcal{E}_1\cap \mathcal{E}_2 = \emptyset$, and $\mathbf{U}\mathcal{E}_1 = \mathcal{E}_2$.
\end{proposition}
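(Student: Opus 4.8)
The plan is to recognize $\mathbf{U}$ as rotation by $90^\circ$ and to show that it descends to a \emph{fixed-point-free involution} on $\mathcal{E}$; the claimed splitting of $\mathcal{E}$ into $\mathcal{E}_1,\mathcal{E}_2$ is then just the standard decomposition of a set carrying such an involution into a system of orbit representatives together with its image.

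First I would verify that $\mathbf{U}$ is well defined on $\mathcal{E}$. Since $\mathbf{U}$ is an integer matrix with $\det\mathbf{U}=1$ and integer inverse, it permutes $\mathbb{Z}^2\setminus\{(0,0)\}$; and since $\mathbf{U}(-\mathbf{a})=-\mathbf{U}\mathbf{a}$ it commutes with the $\mathbb{Z}_2$-action, so $[\mathbf{a}]\mapsto[\mathbf{U}\mathbf{a}]$ is a well-defined bijection of $\mathcal{E}$, and the ``unique pair'' $[\mathbf{b}]$ of the statement is forced to equal $[\mathbf{U}\mathbf{a}]$ because $\mathbf{U}$ is a map. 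Next, a one-line computation gives $\mathbf{U}^2=-\mathbf{I}_2$; but $-\mathbf{I}_2$ is precisely the generator of the $\mathbb{Z}_2$-action we quotiented out, so $\mathbf{U}^2$ is the identity on $\mathcal{E}$. Hence $\mathbf{U}$ is an involution on $\mathcal{E}$, which already yields $\mathbf{U}[\mathbf{b}]=\mathbf{U}^2[\mathbf{a}]=[\mathbf{a}]$.

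The only substantive point is \emph{non-degeneracy}: $\mathbf{U}[\mathbf{a}]\neq[\mathbf{a}]$ for every $[\mathbf{a}]\in\mathcal{E}$. Suppose $\mathbf{U}\mathbf{a}=\pm\mathbf{a}$ for some $\mathbf{a}\neq(0,0)$. Since $\mathbf{U}\mathbf{a}=(a_2,-a_1)^T$ we have $\mathbf{a}^T(\mathbf{U}\mathbf{a})=0$, i.e.\ $\mathbf{U}\mathbf{a}\perp\mathbf{a}$, and a nonzero vector cannot be a scalar multiple of a vector orthogonal to it, a contradiction. (Equivalently, $(a_2,-a_1)=(a_1,a_2)$ and $(a_2,-a_1)=(-a_1,-a_2)$ each force $a_1=a_2=0$.) Thus $\mathbf{U}$ has no fixed point on $\mathcal{E}$, so its orbits are exactly the two-element sets $\{[\mathbf{a}],\mathbf{U}[\mathbf{a}]\}$, which is the first assertion of the proposition.

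Finally, for the partition I would pick one representative from each $\mathbf{U}$-orbit to form $\mathcal{E}_1$ and set $\mathcal{E}_2:=\mathbf{U}\mathcal{E}_1$; then $\mathcal{E}_1\cup\mathcal{E}_2=\mathcal{E}$ (involution), $\mathcal{E}_1\cap\mathcal{E}_2=\emptyset$ (no fixed points), and $\mathbf{U}\mathcal{E}_1=\mathcal{E}_2$ by construction. To keep this choice explicit — which matters here, since each relay must be preassigned its set without coordination — I would identify $[\mathbf{a}]$ with the angle $\theta\in[0,\pi)$ of the line it spans; then $\mathbf{U}$ acts by $\theta\mapsto\theta+\tfrac{\pi}{2}\bmod\pi$, and taking $\mathcal{E}_1=\{[\mathbf{a}]:\theta\in[0,\tfrac{\pi}{2})\}$, $\mathcal{E}_2=\{[\mathbf{a}]:\theta\in[\tfrac{\pi}{2},\pi)\}$ does the job (in half-plane representatives, $\mathcal{E}_1=\{(a_1,a_2):a_1>0,\ a_2\ge 0\}$). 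The main obstacle is purely bookkeeping: one must consistently read $\mathbf{U}$ and every claimed identity on the quotient $\mathcal{E}$ rather than on $\mathbb{Z}^2$, where $\mathbf{U}^2=-\mathbf{I}_2$ collapses to the identity; once that is kept straight, every step is immediate.
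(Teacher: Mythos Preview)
Your proof is correct and follows essentially the same route as the paper: show $\mathbf{U}$ commutes with $\pm 1$ and hence descends to $\mathcal{E}$, observe that $\mathbf{U}^2$ acts as the identity there, verify it has no fixed points, and conclude that the orbits are two-element sets whose representatives form $\mathcal{E}_1$. Your argument is somewhat more detailed (the paper dispatches the fixed-point-free step with ``clearly $[\mathbf{a}]\neq[\mathbf{b}]$ as $\mathbf{a}\neq(0,0)$'') and your explicit angular construction of $\mathcal{E}_1,\mathcal{E}_2$ is a nice addition the paper omits, but the underlying approach is the same.
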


\begin{proof}
	Consider the vector $\mathbf{a} = (a_1, a_2) \in \mathbb{Z}^2\backslash\left\{(0,0)\right\}$. Its equivalence class is $\left\{(a_1,a_2),(-a_1,-a_2)\right\}$. The action of $\mathbf{U}$ on a vector commutes with scalar multiplication by $\pm 1$. Thus $\mathbf{U}\left[\mathbf{a}\right] = \left\{(a_2, –a_1),(-a_2, a_1)\right\}  \equiv \left[\mathbf{b}\right] \in \mathcal{E}$. Clearly, $\left[\mathbf{a}\right]\neq \left[\mathbf{b}\right]$, as $\mathbf{a}\neq (0,0)$, and $\mathbf{U}\left[\mathbf{b}\right] = \left\{(-a_1,-a_2),(a_1,a_2)\right\} = \left[\mathbf{a}\right]$. The action of $\mathbf{U}$ thus divides $\mathcal{E}$ into a disjoint set of pairs. One element of each pair can be taken to $\mathcal{E}_1$, the other to $\mathcal{E}_2$.
\end{proof}

Following Remark \ref{quadratic_form}, we define the following set.  
\begin{align}
\label{seteqn}
	\begin{split}
	V = \left\{\left.(x,y) \in \mathbb{Z}^2 \right| x \ge 1, y \neq 0, x^2 + y^2 \le d^2, \right. \\
	\left. d \in \mathbb{Z}\backslash\left\{0\right\}, \gcd(x,y) = 1\right\}.
	\end{split}
\end{align}
Let $\tilde{V}_1, \tilde{V}_2 \subset V$ such that $|\tilde{V}_1| = |\tilde{V}_2|$, $\tilde{V}_1 \cup \tilde{V}_2 = V$ and $\tilde{V}_1 \cap \tilde{V}_2 = \emptyset$, and define
\begin{align}
\label{setseqn2}
	V_1 = \tilde{V}_1 \cup \left\{(0,1)\right\}, \quad
	V_2 = \tilde{V}_2 \cup \left\{(1,0)\right\}.
\end{align} 

While preventing the relays from selecting arbitrary coefficient vectors decreases the expected computation rate at the relays, choosing linearly independent vectors ensures that the messages can be recovered at the destination -- provided that the relays are able to support the message rate at the transmitters. This claim is justified in the following proposition. 

\begin{proposition}
\label{prop:nonzero_rate}
	For any choice of sets $V_1$, $V_2$ as proposed above, we have almost surely 
	\begin{align}
		\min{\{\mathcal{R}_1^r,\mathcal{R}_2^r\}}\cdot \mathbbm{1}_{\left\{\det(\mathbf{A}) \neq 0\right\}} > 0.
	\end{align} 
\end{proposition}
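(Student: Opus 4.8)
The plan is to bound the two factors in the product separately: to show that the indicator $\mathbbm{1}_{\{\det(\mathbf{A})\neq 0\}}$ is in fact identically $1$ for the splitting construction, and that $\min\{\mathcal{R}_1^r,\mathcal{R}_2^r\}$ is strictly positive with probability one; the conclusion is then immediate.

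For the first factor I would argue that \emph{every} admissible pair $\mathbf{a}_1\in V_1$, $\mathbf{a}_2\in V_2$ consists of $\mathbb{Z}$-linearly independent vectors, so $\mathbf{A}$ is always invertible. The mechanism is that the defining constraints on $V$ in \eqref{seteqn} --- $\gcd(x,y)=1$ together with $x\ge 1$ --- make $V$ an antichain for the scalar-multiple relation: if $(x,y)=\lambda(x',y')$ with both vectors primitive then $\lambda=\pm 1$, and $x,x'\ge 1$ rules out $\lambda=-1$, so distinct elements of $V$ are linearly independent. Since $\tilde V_1\cap\tilde V_2=\emptyset$, a vector taken from $\tilde V_1$ and one taken from $\tilde V_2$ are then automatically independent; the remaining cases are the appended basis vectors from \eqref{setseqn2}, and I would check directly that $(0,1)$ is independent of $(1,0)$ and of every $(x,y)\in V$ (which has $x\neq 0$), and symmetrically that $(1,0)$ is independent of $(0,1)$ and of every element of $V$ (which has $y\neq 0$). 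This gives $\det(\mathbf{A})\neq 0$ deterministically, so the indicator equals $1$.

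For the second factor I would use that each relay picks the rate-maximizing vector in its own set, so it is enough to exhibit, for each relay, a single available vector whose computation rate is positive. The natural candidates are the appended unit vectors: $(0,1)\in V_1$ for relay $1$ and $(1,0)\in V_2$ for relay $2$. Evaluating \eqref{compurate} at a standard basis vector $\mathbf{e}$ yields
\[
\mathcal{R}_m^r(\mathbf{h}_m,\mathbf{e}) = \frac{1}{2}\log^+\!\left(\left(1-\frac{P\,|\mathbf{h}_m^T\mathbf{e}|^2}{1+P\|\mathbf{h}_m\|^2}\right)^{-1}\right),
\]
and by Cauchy--Schwarz the inner quantity lies in $(0,1]$, equalling $1$ exactly when $\mathbf{h}_m^T\mathbf{e}=0$; hence this rate is strictly positive as soon as the relevant channel coefficient ($h_{12}$ for relay $1$, $h_{21}$ for relay $2$) is nonzero. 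Since the $h_{ml}$ are absolutely continuous $\mathcal{N}(0,1)$ random variables, that event has probability one, so $\mathcal{R}_m^r\ge \mathcal{R}_m^r(\mathbf{h}_m,\mathbf{e})>0$ almost surely for $m=1,2$; multiplying by the deterministic indicator $1$ gives the claim.

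I expect the only real subtlety to be in the first part --- ensuring the splitting forces independence \emph{for every} valid partition $(\tilde V_1,\tilde V_2)$ of $V$, which is exactly what the primitivity-plus-half-plane normalization discussed in Remark \ref{quadratic_form} buys us. The probabilistic content is limited to discarding the measure-zero event that a Gaussian channel gain vanishes, and the one point worth stating carefully is that the worst-case fallback vector available to each relay, namely the appended unit vector, already guarantees a positive rate with probability one regardless of how $V$ is split.
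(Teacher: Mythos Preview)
Your proposal is correct and follows essentially the same route as the paper: first argue that $\det(\mathbf{A})\neq 0$ deterministically, then use the appended unit vectors $(0,1)\in V_1$ and $(1,0)\in V_2$ as fallback choices to force each relay's maximized rate to be strictly positive almost surely. Your handling of the first step is in fact more careful than the paper's --- the paper simply invokes $V_1\cap V_2=\emptyset$ to conclude $\mathbbm{1}_{\{\det(\mathbf{A})\neq 0\}}\equiv 1$, whereas you correctly note that disjointness alone is not enough and supply the primitivity-plus-half-plane argument that makes distinct admissible vectors genuinely $\mathbb{Z}$-linearly independent.
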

\begin{proof}
	Since $V_1 \cap V_2 = \emptyset$, we have $\mathbbm{1}_{\left\{\det(\mathbf{A}) \neq 0\right\}} \equiv 1$. 
	It hence suffices to show that there exist vectors $\mathbf{a}_1 \in V_1$, $\mathbf{a}_2 \in V_2$ such that for every channel realization $\mathbf{h} = \left[\begin{smallmatrix} h_{1} & h_{2} \end{smallmatrix}\right]^T$,
	\begin{align*}
		\left(||\mathbf{a}_m||^2-\frac{P|\mathbf{h}^T \mathbf{a}_m|^2}{1+P||\mathbf{h}||^2}\right)^{-1} > 1,
	\end{align*}
	or equivalently $0 < ||\mathbf{a}_m||^2-\frac{P|\mathbf{h}^T \mathbf{a}_m|^2}{1+P||\mathbf{h}||^2} < 1$. 
	
	Choose $\mathbf{a}_1 = \left[\begin{smallmatrix} 1 & 0 \end{smallmatrix}\right]^T \in V_1$, $\mathbf{a}_2 = \left[\begin{smallmatrix} 0 & 1 \end{smallmatrix}\right]^T \in V_2$. Then, for $m = 1,2$ we have $||\mathbf{a}_m	||^2 = 1$, and, since 
	\begin{align*}
		|\mathbf{h}^T \mathbf{a}_m|^2 = \begin{cases} h_1^2 &\mbox{if } m = 2, \\ h_2^2 &\mbox{if } m = 1, \end{cases}
	\end{align*}
	for $m = 1,2$ we have $0 < \frac{P h_m^2}{1+P(h_1^2+h_2^2)} < 1$, hence 
	\begin{align*}
		0 < 1-\frac{P h_m^2}{1+P(h_1^2+h_2^2)} \le 1,
	\end{align*}
	as required, where equality holds if and only if $h_m = 0$. 		
\end{proof}

The following proposition gives a further criterion for designing the sets of vectors $V_1$ and $V_2$. 

\begin{proposition}
\label{prop:equal_avgrate}
	Let $V_1$ and $V_2$ be two sets of vectors, chosen as above, and denote by $\mathcal{R}_{V_1}^r$, $\mathcal{R}_{V_2}^r$ the expected computation rate of the first and second relay using the assigned sets, respectively. If there exists a rotation matrix $\mathbf{U}$ such that
	\begin{align}
		\mathbf{U}: V_1 \to V_2; \quad \mathbf{a} \mapsto \mathbf{U}\mathbf{a}
	\end{align}
	is a bijective isometry, we have
	\begin{align}
		\mathcal{R}_{V_1}^r = \mathcal{R}_{V_2}^r. 
	\end{align} 
\end{proposition}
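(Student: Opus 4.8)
The plan is to combine two symmetries: (i) the computation rate expression \eqref{compurate} depends on $\mathbf{h}$ and $\mathbf{a}$ only through the quantities $\|\mathbf{a}\|^2$, $\|\mathbf{h}\|^2$ and $\mathbf{h}^T\mathbf{a}$, all of which are preserved when $\mathbf{h}$ and $\mathbf{a}$ are rotated together; and (ii) the channel vector $\mathbf{h}_m$ has i.i.d.\ $\mathcal{N}(0,1)$ entries, so its law is invariant under every rotation. Writing the expected computation rate of a relay equipped with candidate set $V$ as
\begin{align*}
	\mathcal{R}_V^r = \mathbb{E}_{\mathbf{h}}\left[\max_{\mathbf{a}\in V}\mathcal{R}^r(\mathbf{h},\mathbf{a})\right], \qquad \mathbf{h}\sim\mathcal{N}(\mathbf{0},\mathbf{I}_2),
\end{align*}
with $\mathcal{R}^r$ given by \eqref{compurate}, the goal is to show that the right-hand side is the same for $V=V_1$ and $V=V_2$.

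First I would record the pointwise identity $\mathcal{R}^r(\mathbf{U}\mathbf{h},\mathbf{U}\mathbf{a}) = \mathcal{R}^r(\mathbf{h},\mathbf{a})$. Since $\mathbf{U}$ is a rotation matrix, $\mathbf{U}^T\mathbf{U}=\mathbf{I}_2$, whence $\|\mathbf{U}\mathbf{a}\|=\|\mathbf{a}\|$, $\|\mathbf{U}\mathbf{h}\|=\|\mathbf{h}\|$ and $(\mathbf{U}\mathbf{h})^T(\mathbf{U}\mathbf{a}) = \mathbf{h}^T(\mathbf{U}^T\mathbf{U})\mathbf{a} = \mathbf{h}^T\mathbf{a}$; substituting these into \eqref{compurate} (the outer $\log^+$ plays no role, being a fixed monotone map) yields the claim for every $\mathbf{h},\mathbf{a}\in\mathbb{R}^2$.

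Then I would chain the two facts. Using that $\mathbf{U}\colon V_1\to V_2$ is a bijection to re-index the inner maximum, and then substituting $\mathbf{h}\mapsto\mathbf{U}\mathbf{h}$ inside the expectation (valid by the rotational invariance of the standard Gaussian law on $\mathbb{R}^2$), one obtains
\begin{align*}
	\mathcal{R}_{V_2}^r
	&= \mathbb{E}_{\mathbf{h}}\left[\max_{\mathbf{a}\in V_1}\mathcal{R}^r(\mathbf{h},\mathbf{U}\mathbf{a})\right]
	= \mathbb{E}_{\mathbf{h}}\left[\max_{\mathbf{a}\in V_1}\mathcal{R}^r(\mathbf{U}\mathbf{h},\mathbf{U}\mathbf{a})\right] \\
	&= \mathbb{E}_{\mathbf{h}}\left[\max_{\mathbf{a}\in V_1}\mathcal{R}^r(\mathbf{h},\mathbf{a})\right]
	= \mathcal{R}_{V_1}^r,
\end{align*}
where the third equality invokes the pointwise identity from the previous step.

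This is essentially routine and there is no single hard step; the points deserving care are the following. First, one must pin down what "expected computation rate of a relay using the assigned set" means, namely that the relay picks the rate-maximizing vector within its set — this is what produces the $\max_{\mathbf{a}\in V}$ above and makes the integrand a measurable (indeed continuous) function of $\mathbf{h}$, so the expectation is well defined. Second, the substitution $\mathbf{h}\mapsto\mathbf{U}\mathbf{h}$ must be justified, which rests precisely on the channel assumption that the entries of $\mathbf{h}_m$ are i.i.d.\ $\mathcal{N}(0,1)$, so that $\mathbf{U}\mathbf{h}$ and $\mathbf{h}$ have the same law for any rotation $\mathbf{U}$. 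Finally, since a rotation matrix is automatically an isometry of $\mathbb{R}^2$, the hypothesis has real content only through the requirement that $\mathbf{U}$ carry the finite set $V_1$ onto $V_2$; by Proposition~\ref{prop:rotation} such a $\mathbf{U}$ (for instance $\left[\begin{smallmatrix}0&1\\-1&0\end{smallmatrix}\right]$) is available once $\tilde V_1$ and $\tilde V_2$ in \eqref{setseqn2} are chosen to be interchanged by it, so this is a genuine design criterion rather than a vacuous one.
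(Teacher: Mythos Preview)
Your proof is correct and follows essentially the same route as the paper: both arguments rest on the pointwise rotation invariance $\mathcal{R}^r(\mathbf{U}\mathbf{h},\mathbf{U}\mathbf{a})=\mathcal{R}^r(\mathbf{h},\mathbf{a})$ together with the rotational invariance of the i.i.d.\ Gaussian channel law, then re-index the maximum via the bijection $\mathbf{U}\colon V_1\to V_2$. Your chain of equalities is in fact slightly cleaner than the paper's, which carries the same content but with a mildly awkward relabeling of indices.
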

\begin{proof}	
Assume $\mathbf{U}$ is such a matrix. Then, for every $\mathbf{a} \in V_2$ we find $\tilde{\mathbf{a}} \in V_1$ such that $\mathbf{a} = \mathbf{U}\tilde{\mathbf{a}}$. Moreover, $||\mathbf{a}||^2 = ||\mathbf{U}\tilde{\mathbf{a}}||^2$. 

Let $\mathbf{a} = \mathbf{U}\tilde{\mathbf{a}}$, $\tilde{\mathbf{h}} = \mathbf{U}^{T} \mathbf{h}$. Then, for a fixed power $P$, and since $\mathbf{U}^T \mathbf{U} = \mathbf{I}_2$,
	\begin{align*}
		\mathcal{R}^r(\mathbf{h},\mathbf{a}) &= \frac{1}{2}\log^{+}\left(\left(||\mathbf{a}||^2 - \frac{P|\mathbf{h}^T \mathbf{a}|^2}{1+P||\mathbf{h}||^2}\right)^{-1}\right) \\
		&= \frac{1}{2}\log^{+}\left(\left(||\mathbf{U}\tilde{\mathbf{a}}||^2 - \frac{P|\mathbf{h}^T \mathbf{U}\tilde{\mathbf{a}}|^2}{1+P||\mathbf{h}||^2}\right)^{-1}\right) \\
		&= \frac{1}{2}\log^{+}\left(\left(||\mathbf{U}\tilde{\mathbf{a}}||^2 - \frac{P|(\mathbf{U}\tilde{\mathbf{h}})^T \mathbf{U}\tilde{\mathbf{a}}|^2}{1+P||\mathbf{U}\tilde{\mathbf{h}}||^2}\right)^{-1}\right) \\
		&= \frac{1}{2}\log^{+}\left(\left(||\tilde{\mathbf{a}}||^2 - \frac{P|\tilde{\mathbf{h}}^T \tilde{\mathbf{a}}|^2}{1+P||\tilde{\mathbf{h}}||^2}\right)^{-1}\right) \\
		&= \mathcal{R}^r(\tilde{\mathbf{h}},\tilde{\mathbf{a}}).
	\end{align*}
	It follows $\mathcal{R}^r(\mathbf{h},\mathbf{a}) = \mathcal{R}^r(\mathbf{U}\tilde{\mathbf{h}},\mathbf{U}\tilde{\mathbf{a}}) = \mathcal{R}^r(\tilde{\mathbf{h}},\tilde{\mathbf{a}})$. 
	
	To conclude the proof, note that the distribution of the channel $\mathbf{h}$ is rotation invariant. We thus have
	\begin{align*}
		\mathcal{R}^r_{V_1} &= E_{\mathbf{h}}\left[\max\limits_{\mathbf{a} \in V_1}{\mathcal{R}^r(\mathbf{h},\mathbf{a})}\right] = E_{\mathbf{h}}\left[\max\limits_{\mathbf{U}\tilde{\mathbf{a}} \in V_1}{\mathcal{R}^r(\mathbf{h},\mathbf{U}\tilde{\mathbf{a}})}\right] \\
		&= E_{\mathbf{h}}\left[\max\limits_{\tilde{\mathbf{a}} \in V_2}{\mathcal{R}^r(\mathbf{U}\tilde{\mathbf{h}},\mathbf{U}\tilde{\mathbf{a}})}\right] = E_{\tilde{\mathbf{h}}}\left[\max\limits_{\tilde{\mathbf{a}}\in V_2}{\mathcal{R}^r(\tilde{\mathbf{h}},\tilde{\mathbf{a}})}\right] \\
	&= \mathcal{R}^r_{V_2}.
	\end{align*}
\end{proof}

\emph{3. Multiple Access:} For the sake of comparison, we also consider a simple multiple access method where each relay decodes the message corresponding to the stronger channel coefficient while treating the other message as noise. More concretely, if $\mathbf{h}_m = \left[\begin{smallmatrix} h_{m1} & h_{m2} \end{smallmatrix}\right]^T$ is the channel observed by relay $m$, then the relay decodes $\mathbf{x}_1$ iff $h_{m1} \ge h_{m2}$, and decodes $\mathbf{x}_2$ otherwise. 
In this scenario, for $i,j = 1,2$, $i \neq j$, if $|h_{mi}| \ge |h_{mj}|$, message $\mathbf{x}_i$, can be decoded by relay $m$ if
\begin{align}
	\mathcal{R}_i^s < \frac{1}{2}\log\left(1 + \frac{P|h_{mi}|^2}{1+P|h_{mj}|^2}\right).
\end{align}

We say that a determinant error occurs if both relays decode and forward the same message.

\section{Simulation Results}\label{simu}
In this section, we present extensive simulation results for the end-to-end performance of the methods introduced in Section~\ref{vectorsec} for the 2-by-2 relaying system described earlier. 

We start by defining two sets of vectors $V_1$ and $V_2$ as in \eqref{setseqn2} for the proposed strategy. Note that for a given channel realization $\mathbf{h}_m$, the equation coefficient vector $\mathbf{a}_m$ maximizing the computation rate at relay $m$ is the one that best aligns with $\mathbf{h}_m$ \cite{nazer}. It is thus crucial to provide each relay with vectors that can approximate channel vectors lying in any direction. 

Further, to achieve low computation complexity, we construct disjoint sets of small cardinality. For the following simulations, we fix $d = 4$ in \eqref{seteqn}, resulting in sets of cardinality $|V| = 14$, $|V_1| = |V_2| = 8$. 

\begin{proposition}\label{prop:splitting}
Splitting as discussed above exists, where $V_2 = \mathbf{U}V_1$. For such splittings,  $\min\left\{\mathcal{R}^r_1,\mathcal{R}^r_2\right\}\cdot\mathbbm{1}_{\left\{\det(\mathbf{A})\right\}} > 0$ for any channel realization, almost surely. Further, $\mathcal{R}^r_{V_1} = \mathcal{R}^r_{V_2}$. 
\end{proposition}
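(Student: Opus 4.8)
The plan is to exhibit a concrete splitting built from the rotation $\mathbf{U} = \left[\begin{smallmatrix} 0 & 1 \\ -1 & 0\end{smallmatrix}\right]$ of Proposition~\ref{prop:rotation}, and then read off the two asserted properties from Propositions~\ref{prop:nonzero_rate} and~\ref{prop:equal_avgrate}.

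First I would show that $\mathbf{U}$ organizes $V$ into pairs. For $(x,y)\in V$ one has $\mathbf{U}(x,y)=(y,-x)$; since $\mathbf{U}$ preserves the Euclidean norm and $\gcd(x,y)=\gcd(y,-x)$, and since $x\ge 1$, $y\neq 0$, exactly one of $(y,-x)$, $(-y,x)$ has positive first coordinate and therefore lies in $V$. Hence $\mathbf{U}$ descends to a well-defined map on the classes $[V]\subset\mathcal{E}$, which by Proposition~\ref{prop:rotation} is a fixed-point-free involution ($\mathbf{U}^2=-\mathbf{I}_2$ acts trivially on $\mathcal{E}$). Consequently $[V]$ — equivalently $V$, viewed as a set of class representatives — is a disjoint union of $2$-element orbits, so in particular $|V|$ is even (for $d=4$, $|V|=14$). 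For each orbit I pick the representative $\mathbf{c}\in V$ with $\mathbf{U}\mathbf{c}\in V$ (exactly one of the two orbit elements has this property, since $\mathbf{U}^2\mathbf{c}=-\mathbf{c}\notin V$), put $\mathbf{c}$ into $\tilde V_1$ and $\mathbf{U}\mathbf{c}$ into $\tilde V_2$. This yields disjoint $\tilde V_1,\tilde V_2\subseteq V$ with $\tilde V_1\cup\tilde V_2=V$, $|\tilde V_1|=|\tilde V_2|=|V|/2$, and $\tilde V_2=\mathbf{U}\tilde V_1$. Since $\mathbf{U}(0,1)=(1,0)$, the sets $V_1=\tilde V_1\cup\{(0,1)\}$, $V_2=\tilde V_2\cup\{(1,0)\}$ of \eqref{setseqn2} satisfy $V_2=\mathbf{U}V_1$; this establishes existence.

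Next, for the solvability-and-positivity claim: the sets $V_1,V_2$ are disjoint, every vector in them is primitive (those in $V$ have $\gcd(x,y)=1$; $(1,0)$ and $(0,1)$ are primitive), and moreover $(0,1)$ is parallel to no vector of $V$ (all of which have first coordinate $\ge 1$) and $(1,0)$ is parallel to no vector of $V$ (all of which have nonzero second coordinate). Hence whatever $\mathbf{a}_1\in V_1$, $\mathbf{a}_2\in V_2$ the relays select, $\mathbf{A}$ is invertible, so $\mathbbm{1}_{\{\det(\mathbf{A})\neq 0\}}\equiv 1$. The almost-sure positivity of $\min\{\mathcal{R}^r_1,\mathcal{R}^r_2\}$ is then exactly Proposition~\ref{prop:nonzero_rate} (whose proof uses the vectors $(1,0),(0,1)$, which here lie in $V_2,V_1$ respectively — a harmless relabeling), these vectors yielding rate $>0$ except on the measure-zero events $h_1=0$ or $h_2=0$. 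Finally, for the equal-average-rate claim, note that $\mathbf{U}$ is a rotation matrix ($\det\mathbf{U}=1$, $\mathbf{U}^T\mathbf{U}=\mathbf{I}_2$), hence an isometry, and by construction $\mathbf{a}\mapsto\mathbf{U}\mathbf{a}$ is a bijection $V_1\to V_2$; thus Proposition~\ref{prop:equal_avgrate} applies and gives $\mathcal{R}^r_{V_1}=\mathcal{R}^r_{V_2}$.

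The main obstacle is the bookkeeping in the first step: one must verify carefully that the $\mathbf{U}$-action genuinely descends to a fixed-point-free involution on $[V]$ and — more delicately — that the orbit representatives can be chosen with the correct signs so that the equality $V_2=\mathbf{U}V_1$ holds on the nose rather than merely modulo the $\pm 1$ identification of Remark~\ref{quadratic_form}; only then does Proposition~\ref{prop:equal_avgrate} apply verbatim. Once that is settled, the remaining two parts are direct invocations of the earlier propositions.
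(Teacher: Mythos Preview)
Your proposal is correct and follows essentially the same route as the paper: use the $\mathbf{U}$-pairing of Proposition~\ref{prop:rotation} to split $V$ into two halves related by $\mathbf{U}$, then invoke Proposition~\ref{prop:nonzero_rate} for the positivity claim and Proposition~\ref{prop:equal_avgrate} for the equal-average-rate claim. Your treatment is in fact more careful than the paper's on two points the paper glosses over: you explicitly choose orbit representatives so that $V_2=\mathbf{U}V_1$ holds on the nose (not merely modulo $\pm 1$), which is what Proposition~\ref{prop:equal_avgrate} actually requires, and you verify that disjointness together with primitivity and the sign convention forces linear independence of any pair $\mathbf{a}_1\in V_1$, $\mathbf{a}_2\in V_2$, rather than merely asserting $\mathbbm{1}_{\{\det(\mathbf{A})\neq 0\}}\equiv 1$ from $V_1\cap V_2=\emptyset$ alone.
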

\begin{proof}
	Choose $V_1$ such that $\forall \mathbf{a}\in V_1$, either $\left[\mathbf{a}\right] \in \mathcal{E}_1$ or $\mathbf{U}\left[\mathbf{a}\right] \in \mathcal{E}_1$, but not both. Then according to Prop.~\ref{prop:rotation}, $\mathbf{U}V_1 \in \mathcal{E}_2$, and, $V_1 \cap V_2 = \emptyset$. Note that if $\left[(1,0)\right]\in \mathcal{E}_1$, we have $\mathbf{U}\left[(1,0)\right]  = \left[(0,1)\right] \in \mathcal{E}_2$. Thus, according to Prop.~\ref{prop:nonzero_rate}, the first statement holds and by Prop.~\ref{prop:equal_avgrate}, the second statement follows. 
\end{proof}

One example of such $V_1$ and $V_2$ is depicted in Figure~\ref{vectorfig}. 
\begin{figure}[h]
		\includegraphics[scale=0.26]{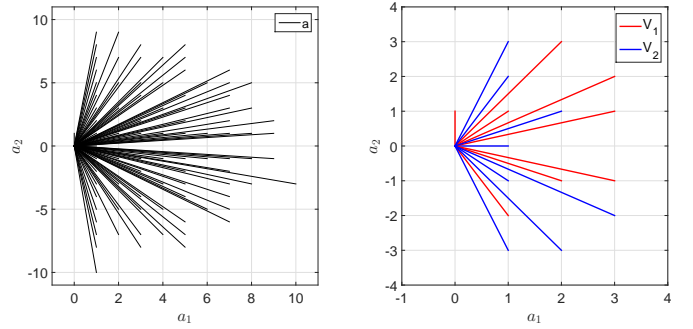}
		\caption{All vectors that maximize the computation rate for $10^5$ channel realizations (left) and sets of vectors $V_1$, $V_2$ for the proposed method (right).}
		\label{vectorfig}
\end{figure}

We now compare the three introduced methods. Our two main performance metrics are \emph{minimum end-to-end rate} and \emph{end-to-end outage} as described in the following.

\emph{Minimum end-to-end rate:} This metric is defined to be the expected value of the minimum of the two computation rates at the relays. If the coefficient matrix is not invertible, this metric is defined to be zero as the destination cannot recover both of the original messages. The minimum end-to-end rate can be expressed as
\begin{align}\label{minrate}
\mathcal{R}_0=E\left(\min{\{\mathcal{R}_1^r,\mathcal{R}_2^r\}}\cdot\mathbbm{1}_{\left\{\det(\mathbf{A}) \neq 0\right\}}\right),
\end{align}
where $E\left(\cdot\right)$ denotes the expected value, and $\mathcal{R}_m^r$ is the computation rate at relay $m$ defined as in \eqref{compurate}. We illustrate this performance metric for varying $\SNR$ values in Figure~\ref{expmin}. 
\begin{figure}[h]
		\includegraphics[scale=0.30]{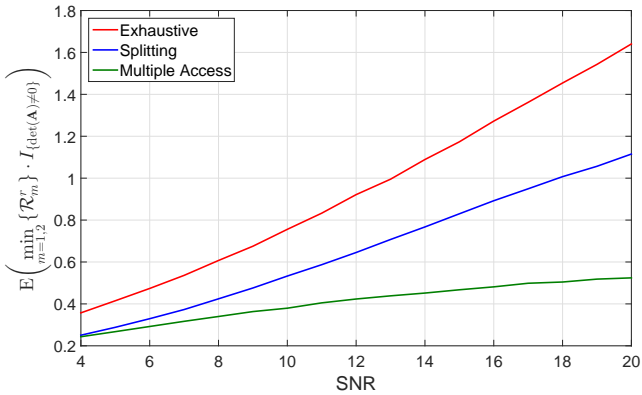}
		\caption{Expected minimum end-to-end rate vs. SNR.}
		\label{expmin}
\end{figure}

The proposed splitting method performs relatively close to the exhaustive search method, and clearly beats the multiple access strategy (cf. Section~\ref{vectorsec}-3). It is to be expected that the splitting method cannot outperform the exhaustive search whenever it has positive rate (cf. \eqref{minrate}) in terms of end-to-end rate, since the exhaustive search has a considerably higher number of coefficient vectors from which to choose. The drawback of the exhaustive search is, however, its high complexity compared to the constant complexity of our proposed method. We argue that, in many cases, it is more important to have a fast, lightweight algorithm that finds relatively high rates, rather than strictly maximizing the rate at the cost of computation time and power.

\emph{End-to-end outage:} The system is said to be in outage if at least one of the relays cannot support the symmetric message rate at the transmitters, or if the equation coefficient matrix is not invertible. The outage probability with symmetric message rates $\mathcal{R}^s$ can be expressed as
\begin{align}
\rho=\Pr{\left((\min{\{\mathcal{R}_1^r,\mathcal{R}_2^r\}} < \mathcal{R}_s) \cup (\det(\mathbf{A}) = 0)\right)}.
\end{align}

Figure \ref{outagefig} presents simulated end-to-end outage probabilities for a fixed $\SNR$ value as a function of the symmetric message rates. 
\begin{figure}[h]
		\includegraphics[scale=0.30]{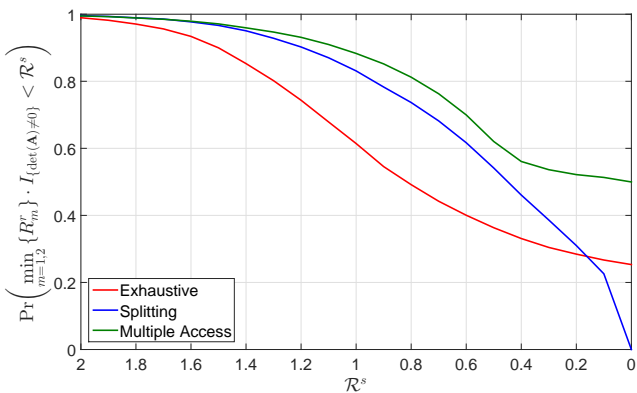}
		\caption{End-to-end outage as a function of the message rate for SNR 10 dB.}
		\label{outagefig}
\end{figure}
While for the exhaustive and multiple-access strategies the outage probability converges to a non-zero value for decreasing message rate, for the proposed splitting method it goes to zero as the message rate goes to zero, as indicated by Prop.~\ref{prop:nonzero_rate}.

\section{Conclusions and Future Work}\label{conclu}
In this article it was shown that pre-assigning sets of equation coefficient vectors to intermediate relays in the framework of Compute-and-Forward relaying helps preventing the overall system of linear equations at the destination from being singular, while maintaining high computation rates and thus being able to support high message rates. The advantage of the proposed method its constant complexity, as opposed to solving a hard shortest vector problem. Moreover, no cooperation between the relays is required.

In this article, we restricted ourselves to real-valued channels, as well as the case of only two transmitters and relays. It was however observed that the problem of the coefficient matrix being singular is even more dramatic for larger number of transmitters and relays. Therefore, as a natural extension, future work includes generalizing the introduced method for an arbitrary, not necessarily symmetric number of transmitters and relays, as well as considering complex-valued channels. 

Moreover, while design criteria for the sets of vectors have been mentioned, it still remains open to describe analytically a way of finding the best possible disjoint partition of the equation coefficient vectors. 

\section*{Acknowledgements}
The authors are financially supported by the Academy of Finland grants \#276031, \#282938, \#283262 and \#268364, the Finnish Funding Agency for Innovation grant 40142/13, and a grant from Magnus Ehrnrooth Foundation. The support from the European Science Foundation under the COST Action IC1104 is also gratefully acknowledged.

\end{document}